\newtheorem{theorem}{Theorem}
\newtheorem{corollary}{Corollary}
\begin{document}
\title{
%Super- to Heisenberg scaling in critical quantum sensing exploiting non-Hermiticity\\
 Critical quantum metrology using non-Hermitian spin model with $\mathcal{RT}$-symmetry \\
}
\author{Keshav Das Agarwal$^1$,  Tanoy Kanti Konar$^{1}$, Leela Ganesh Chandra Lakkaraju$^{1,2,3}$, Aditi Sen(De)$^{1}$}
\affiliation{$^1$ Harish-Chandra Research Institute, A CI of Homi Bhabha National Institute, Chhatnag Road, Jhunsi, Allahabad - 211019, India}
\affiliation{$^2$ Pitaevskii BEC Center and Department of Physics, University of Trento, Via Sommarive 14, I-38123 Trento, Italy }
\affiliation{$^3$ INFN-TIFPA, Trento Institute for Fundamental Physics and Applications, Trento, Italy} 

\begin{abstract}
The non-Hermitian transverse \(XY\) model with Kaplan-Shekhtman-Entin-Wohlman-Aharony (KSEA) interaction having \(\mathcal{RT}\)-symmetry, referred to as \(iKSEA\) model, possesses both an exceptional point at which eigenvectors coalesce and a quantum critical point where gap-closing occurs. To precisely estimate the magnetic field of the system, we prove that the quantum Fisher information (QFI) of the ground state of the \(iKSEA\) model, which is a lower bound of the precision quantified by the root mean square error, scales as \(N^2\), with \(N\) being the system-size. This provides Heisenberg scaling both at the quantum critical point and the exceptional point in the thermodynamic limit. It indicates that reservoir engineering can provide enhanced precision of system parameters when the system is in contact with the bath, resulting in this non-Hermitian model. 
Additionally, we demonstrate analytically that, in contrast to Hermitian systems, QFI surpasses the Heisenberg limit and achieves super-Heisenberg scaling (\(\sim N^6\)), when the strength of the KSEA interaction approaches the anisotropy parameter, permitting competition between non-hermiticity and hermiticity features, as long as the system size is moderate. Moreover, we illustrate that starting from a product state, the non-Hermitian evolving Hamiltonian can create the dynamical state that surpasses the standard quantum limit in the broken regime.

% \kda{We finally show a dynamical state, generated by evolving a product state with the  \(iKSEA\) model in the broken phase, gives higher than linear scaling of QFI with \(N\), giving non-Hermitian advantage.}
\end{abstract}

\maketitle 

\section{Introduction}
\label{sec:intro}

 %Quantum technology provides a novel perspective on developing more advanced and efficient technologies compared to classical counterparts. The utilization of quantum mechanics enhances the performance of various applications, such as thermal devices \cite{Bhattacharjee2021Dec,qb_review,Cangemi2024Oct} and key distribution \cite{}, among others. 
 %One significant advancement is 

 A significant milestone in the development of quantum technology has been the discovery of quantum sensors, which offer advantages over existing classical devices in the precision measurement of parameters such as temperature, magnetic, electric, and gravitational fields, and time  \cite{sensing_review_1,sensing_review_2,montenegro2024review}.  Further, it was found that  the sensitivity of  system parameters can be enhanced by utilizing cooperative quantum phenomena, like quantum critical point,  acting as a resource in quantum sensing  \cite{zanardi_1,zanardi_2,zanardi_3}. 
 %have introduced distinct improvements in sensing capabilities  Quantum sensing finds applications in multiple physical \cite{malnou_prx_2019} and engineering fields \cite{Albarelli2020Apr}, including gravitational wave detection \cite{Schnabel2010Nov}. 
 In order to establish quantum benefit, a crucial metric used is quantum Fisher information (QFI), where  a greater QFI indicates a more accurate system parameter estimation \cite{Cramr1946}. 
 %In this context, it has been demonstrated that cooperative quantum effects can lead to an exponential enhancement of QFI. 
 %More specifically, 
 Precisely, when employing an \(N\)-particle interacting system as a probe, QFI scales as \(\sim N^\mu\) with \(\mu\) being the scaling component \cite{giovannetti_prl_2006,Giovannetti2011Apr}. The standard quantum limit (SQL) \cite{pezze_prl_2009}, which can be attained by conventional means, is represented by the situation of \(\mu=1\) 
 %The case of \(\mu=1\) corresponds to the standard quantum limit (SQL)   achievable using classical methods 
 while when \(\mu>1\), the system exhibits a quantum advantage. Notably, when \(\mu\sim 2\), the scaling reaches the Heisenberg limit (HL). Interestingly, certain quantum systems can surpass the HL, achieving what is referred to as the super-Heisenberg limit, and therefore displaying an even higher quantum advantage \cite{boixo_prl_2007,boixo_prl_2008,mishra_prl_2021,Yousefjani2023Oct,Mihailescu2024May,sahoo_pra_2024,Adani2024Aug,Mondal2024Jul,Mihailescu2024Jul,Yousefjani_pra_2025,Mihailescu2025Mar}.

In contrast, significant advancements have been made in non-Hermitian systems, which provide a framework for controlling a system when it interacts with its environment \cite{bender_prl_1998,bender_prl_2002,bender_ropp_2007, Ueda_review,metelmann_prx_2015, Fazio_reservoir_engineering_2018}.
%It has been demonstrated that, under certain conditions, the effective evolution of an open quantum system can be described using a non-Hermitian Hamiltonian~\cite{
 Since the inception of this notion, extensive research has been conducted to explore the topological and physical properties of such non-equilibrium systems \cite{lee_prx_2014,gong_prx_2018,kawabata_prx_2019,Martinez_prb_2018,Borgnia_prl_2020,Li2020Oct,Chen2024Nov}. Additionally, recent experiments have confirmed that non-Hermitian Hamiltonians can be  realized in certain situations  \cite{guo_pt_exp,Wu2019May,Naghiloo2019Dec} 
%which include the no-click limit of continuous measurements on a quantum system and the omission of quantum jumps in a Markovian quantum master equation 
such as when quantum jumps in a Markovian quantum master equation \cite{open_quan_book} are omitted or when continuous measurements on a quantum system have a no-click limit \cite{Roccati2022Mar}. These developments have had a significant influence on the  study of non-Hermitian systems in various quantum technologies, including quantum batteries \cite{konar2022quantum}, quantum sensing \cite{Chen2019Aug,budich_prl_2020,Edvardsson_prb_2022,Sarkar2024Jul, Sarkar2024Nov}, and thermal machines \cite{Santos2023Dec}. 

In the context of quantum sensing, exceptional points (EPs) \cite{bender_ropp_2007}, where the eigenspectrum coalesces, are particularly significant since, in their vicinity,  the scaling of quantum Fisher information gets improved \cite{liu_prl_2016,chen_nature_ep_sensing_2017,Lau2018Oct,McDonald2020Oct,chu_prl_2020}.
%Focusing on quantum sensing, exceptional points (EPs) — where the eigenspectrum coalesces — hold particular importance in non-Hermitian sensing. 
%Notably, it has been observed that quantum Fisher information (QFI) is enhanced in the vicinity of these exceptional points \cite{liu_prl_2016,chen_nature_ep_sensing_2017,Lau2018Oct,McDonald2020Oct,chu_prl_2020}. 
Nonetheless, there are ongoing discussions about whether QFI enhancement around EPs is legitimate in critical quantum metrology  \cite{ding_prl_2023}, even though it has been shown that quantum sensing can still function well without EPs \cite{budich_prl_2020,bao_prap_2022,Sarkar2024Jul}. 
%However, debates persist regarding the validity of QFI enhancement around EPs in critical quantum metrology \cite{ding_prl_2023}. Importantly, it has been demonstrated that quantum sensing can still be effective without relying on EPs \cite{budich_prl_2020,bao_prap_2022,Sarkar2024Jul}. Similar to the Hermitian case, understanding the role of critical points in non-Hermitian Hamiltonians is crucial for quantum sensing, even when these critical points do not correspond to EPs.
The fundamental reason why a non-Hermitian system acts as a quantum sensor lies in the shift of the energy spectrum at the EP. When the parameters of the Hamiltonian vary infinitesimally by $\epsilon$, the resulting change in energy $\delta \omega$ follows as $\sqrt{\epsilon}$
%In contrast, 
while the response to perturbation at the critical point in a Hermitian quantum sensor follows a linear scaling
($\delta \omega \sim \epsilon$) \cite{pt_sym_exp1}. A particularly striking feature emerges when considering non-Hermitian many-body systems, QFI scales exponentially with the order of the exceptional point~\cite{pt_sym_exp1}. This results in a substantially stronger response to perturbations of system parameters compared to Hermitian many-body systems. Such enhanced sensitivity has been experimentally demonstrated in various physical platforms, including superconducting qubits \cite{superconducting_nonH_sensor}, trapped ions \cite{claverorubio2025vibrationalparametricarraystrapped, 10.1063/5.0168372}, optical cavity \cite{Chen2017Aug}, photonic systems \cite{Guo2021Apr,xiao_prl_2024,Yu2024May,Parto2025Jan} and NV centers \cite{wu2025enhancedquantumsensingtimemodulated_nitrogen}, among others.  
Therefore, it is essential to comprehend the significance of critical points in non-Hermitian Hamiltonians in order to develop quantum sensors, much like in the Hermitian situation.
%While both exceptional and critical points offer quantum advantages in sensing capabilities, non-Hermitian many-body systems demonstrate a distinct enhancement due to the higher-order nature of their exceptional points.

We employ here a non-Hermitian  \(XY\) model along with  Kaplan-Shekhtman-Entin-Wohlman-Aharony (KSEA) interactions  to answer the question of accurately estimating the magnetic field. It possesses rotational and time symmetry (\(\mathcal{RT}\)-symmetry) and is  referred  to as \(iKSEA\) model. This choice is motivated by the fact that it undergoes two quantum phase transitions -- the critical point where the gapped to gapless transition occurs and the exceptional point in which eigenvalues coalesce and the system moves from the unbroken phase (the eigenspectrum remains real, and the eigenvectors commute with the \(\mathcal{RT}\)-symmetric operator) to the broken one (the eigenvalues become complex, and the eigenvectors no longer commute with the \(\mathcal{RT}\)-symmetric operator). 
Importantly, it can be effectively simulated  by designing a suitable bath and ignoring the jump terms in the Markovian master equation \cite{open_quan_book, Agarwal2023May}.  We demonstrate the significance of developing quantum sensors based on non-Hermitian systems by analytically proving that both exceptional and quantum critical points in the ground state of the \(iKSEA\) model  can greatly increase QFI, yielding the Heisenberg limit (HL).    Furthermore, we demonstrate that when the system size is moderate — achievable with current technology —and the KSEA interaction strength approaches the anisotropy parameter, QFI outperforms HL and scales as \(N^6\) with \(N\) being the system-size. It is important to note here that this super-Heisenberg scaling can only be obtained due to the contention between non-Hermitian and Hermitian parameters in the \(iKSEA\) model.

%\adi{Additionally, we demonstrate that a specific time scaling behavior emerges within this system, further enriching its potential applications. -- have to rewrite once the dynamics is settled.  }

%In this work, we investigate a non-Hermitian Hamiltonian that exhibits \(\mathcal{RT}\)-symmetry and consists of two distinct phases:  \textbf{Unbroken phase:} In this regime, the eigenspectrum of the Hamiltonian remains real, and the eigenvectors commute with the \(\mathcal{RT}\)-symmetric operator. \textbf{Broken phase:} Here, the eigenvalues of the Hamiltonian become complex, and the eigenvectors no longer commute with the \(\mathcal{RT}\)-symmetric operator. Such a Hamiltonian can be effectively simulated by engineering an appropriate bath while neglecting the jump terms in the Markovian master equation. Focusing on the sensing capabilities of this non-Hermitian Hamiltonian, we demonstrate that the criticality of the system—marked by a transition from a gapped to a gapless phase—can significantly enhance quantum Fisher information. Furthermore, we show that the Heisenberg limit (HL) can be achieved within this system, highlighting the quantum advantage provided by non-Hermitian Hamiltonians beyond exceptional points (EPs).  Although our analytical results confirm that the HL is preserved around EPs, we also establish that the sensing advantage extends beyond these points to other critical points of the Hamiltonian. Additionally, we demonstrate that a specific time scaling behavior emerges within this system, further enriching its potential applications.  

In the dynamical framework, both the time and system-size are considered as resource for quantum metrology when the system evolved from a suitable initial state. By choosing the initial state as a product state, sensing a parameter, encoded by a $k$-body operator provides QFI which scales as $N^k t^2$ \cite{boixo_prl_2007, puig2024arxiv} for the optimal  Hermitian evolution while better scaling of QFI has been shown starting from entangled states \cite{pang_pra_2014, sahoo_pra_2024, abiuso_prl_2025}, or with stark localization \cite{he_prl_2023, manshouri2024} or long-range Hamiltonian \cite{guan_prr_dqpt_sensing_2021, shi_prl_2024}.  Our study exhibits that the quantum Fisher information  of the dynamical state, used to estimate the magnetic field, can exceed the standard quantum limit  at long evolution times, thereby demonstrating a genuine quantum advantage. This enhancement arises when the system is initialized in a product state and subsequently evolves under a non-Hermitian Hamiltonian in the broken phase. Importantly, this advantage cannot be replicated by any equivalent nearest-neighbor Hermitian model with uniform magnetic field.

The organization of the paper is as follows: After the Introduction, we briefly introduce the parameter to be estimated and the corresponding non-Hermitian Hamiltonian in Sec. \ref{sec:setstage}. In Sec. \ref{sec:criticalpointscaling}, we prove analytically that Heisenberg scaling of \(N^2\) with system-size, \(N\) can be achieved both at the quantum critical points and at the exceptional points, thereby showing the benefit of non-Hermitian model.  In Sec. \ref{sec:dynamicsexceptionalpoint}, we show, beyond the static scenario, how nonlinear scaling of quantum Fisher information with time can be attained starting from the product state and after tuning the system parameters. Sec. \ref{sec:conclu} summarizes the results.

%\section{Quantum-enhanced sensing via Non-Hermitian system}
%\label{sec:qadvexceptional}

\section{Set the stage: Quantum Fisher Information and non-Hermitian Model }
\label{sec:setstage}

For non-Hermitian system, we first discuss about  the Cram{\'e}r - Rao bound which is a lower bound for the estimation of  the parameters.  We then introduce the non-Hermitian model that we consider in this work and briefly describe its critical points. 
%Before presenting the analytical proof of scaling in which the precision of the parameter  can be improved with the system-size, . 

\subsection{ Quantum Fisher information for non-Hermitian model.}
%Quantum sensing is concerned with the precise estimation of 
In order to estimate an unknown parameter \( \theta \),  the parameter is encoded on a state \( \rho_\theta \), referred to as the probe system on which 
%. The unknown parameter 
%can be inferred by performing 
%measurements on the probe system, which are described by 
positive operator-valued measurements (POVM) is performed to infer \(\theta\). The 
%fundamental lower bound, known as  the 
Cramér-Rao bound \cite{Cramr1946},   \( \delta\theta \geq \frac{1}{\sqrt{M \mathcal{F}^H_\theta}}, \) provides  the lower bound on the precision of the estimation of \( \theta \),
%can be estimated is dictated by
where \(\mathcal{F}^H_\theta\) represents the QFI, obtained by maximizing over all possible measurement bases~\cite{sensing_review_1,sensing_review_2}. 
%Generally, a higher QFI value indicates that the state enables more precise estimation of the parameter.  
While optimizing over measurements is necessary to determine the QFI for a general state, for pure states, it can be computed directly as 
%\begin{equation}
 \(   \mathcal{F}^{H}_\theta(\ket{\psi}) = 4  \langle d_\theta\psi | d_\theta\psi \rangle - |\langle d_\theta\psi | \psi \rangle|^2 \),
%\end{equation}  
where \( | d_\theta\psi\rangle = \frac{d}{d\theta}|\psi\rangle \), and the superscript, \(H\) denotes the Hermitian system. 
%This QFI represents the ultimate bound on the precision of parameter estimation for various optimal measurement bases.

%The extension of quantum estimation to non-Hermitian systems has been explored in several contexts. However, 
In non-Hermitian systems, the eigenstates do not form an orthonormal basis, as the right and left eigenvectors of the Hamiltonian are not identical. Hence, to construct a proper probability distribution, it is necessary to renormalize the eigenvectors using the Dirac norm formulation 
%.  For instance, if \( \ket{\Phi_n} \) represents a right eigenstate of the Hamiltonian, it must be renormalized 
as \(\ket{\Phi_n} \rightarrow \frac{\ket{\Phi_n}}{\sqrt{\braket{\Phi_n }{ \Phi_n}}}\), where \( \ket{\Phi_n} \) represents a right eigenstate of the Hamiltonian and the corresponding  
%ensuring normalization with respect to the Dirac norm. 
%Based on this framework, QFI in non-Hermitian systems 
QFI reads as 
\begin{equation}
    \mathcal{F}^{nH}_\theta(\ket{\Phi_n}) = 4 \text{Re} \left( \langle d_\theta\Phi_n | d_\theta\Phi_n \rangle - |\langle d_\theta\Phi_n | \Phi_n\rangle|^2 \right),
    \label{eq:qfi_def}
\end{equation}   
%the state \( \ket{\Phi_n} \) is normalized using the Dirac norm, and the inner product between the state and 
where \(nH\) in the superscript represents the non-Hermitian system and  its derivative is also evaluated with respect to the Dirac inner product.

%{\it Non-Hermitian quantum spin model.} 
\subsection{ Non-Hermitian quantum spin model with \(\mathcal{RT}\)-symmetry} 

Let us consider the non-Hermitian \(XY\) model in conjunction with Kaplan-Shekhtman-Entin-Wohlman-Aharony ($iKSEA$) model on $N$ spin-$1/2$ systems, given by 
\begin{align}
    \nonumber H^{iKSEA} &= \sum_{j=1}^N \frac{(1+i\gamma)}{4} \sigma_j^x\sigma_{j+1}^x + \frac{(1-i\gamma)}{4} \sigma_j^y\sigma_{j+1}^y  \\ 
    & + \frac{K}{4}(\sigma_j^x\sigma_{j+1}^y+\sigma_j^y\sigma_{j+1}^x) + \frac{h}{2} \sigma_j^z,
    \label{eq:Hamil}
\end{align}  
with the periodic boundary condition, i.e. $\sigma^a_{L+1}\equiv \sigma^a_1$, where \( \sigma^a \) (\( a \in \{x, y, z\} \)) are the Pauli matrices, \( \gamma \) represents the anisotropy parameter responsible for the non-Hermiticity, and \( K \) and \( h \) denote the strength of the $KSEA$ interaction and the external magnetic field, respectively. This model possesses rotational and time-reversal symmetry, together called \(\mathcal{RT}\)-symmetry which is given as \(\mathcal{R}\equiv \exp[-i\frac{\pi}{4}\sum_{j=1}^N\sigma_j^z]\), a \(\frac{\pi}{2}\) rotation around the \(z\)-axis and the time-reversal operation, \(\mathcal{T}i\mathcal{T}^{-1}=-i\). Although the Hamiltonian satisfies the relation, \([H,\mathcal{RT}]=0\), due to the anti-linear properties of the symmetry operations in the broken region, eigenstates of the Hamiltonian do not have \(\mathcal{RT}\) symmetry.
%When \( K=0 \), \( H^{iKSEA} \) reduces to the well-studied \( iXY \) non-Hermitian Hamiltonian, which has been explored in the literature for various physical phenomena \cite{Song_RT_symm}. 
%A key motivation for choosing the \( H^{iKSEA} \) model is its analytical solvability, which facilitates a deeper understanding of its physical properties. 
%Importantly, 
%The Hamiltonian

A key motivation for choosing the \( H^{iKSEA} \) model is that it can be solved analytically using a combination of the Jordan-Wigner, Fourier, and Bogoliubov transformations~\cite{barouch_pra_1970,barouch_pra_1971} and possess rich phases \cite{Agarwal2023May} (see Appendix \ref{appendix:exactdiag}). The system exhibits both unbroken and broken phases within the \((h, \gamma, K)\) parameter space, separated by exceptional points. In the unbroken phase, the eigenvalues are real, and the eigenvectors form a complete (though non-orthogonal) basis that satisfies $\mathcal{RT}$ symmetry, and can also be biorthogonalized. In the broken phase, the Hamiltonian becomes defective with an incomplete set of eigenvectors, leading to the breaking of $\mathcal{RT}$ symmetry. At the exceptional points, the eigenvalues remain real (and degenerate), but the non-orthogonal eigenvectors coalesce, rendering the Hamiltonian defective.
The \( iKSEA \) model features a critical magnetic field strength at \( h_c = 1 \), where the ground state becomes degenerate without the eigenvectors coalescing, similar to traditional Hermitian systems. %Notably, \( H^{iKSEA} \) exhibits this critical field both in transitions between broken phases and within the unbroken regime.  
When \( K < \gamma \), the exceptional point occurs at \( h_e = \sqrt{1+\gamma^2-K^2} \), with the system being in the unbroken phase for \( h > h_e \) and in the broken phase for \( h < h_e \), while \( h_c = 1 \) remains a critical point in the broken phase. Interestingly, when \( K > \gamma \), the system remains entirely in the unbroken phase for all \( h \), with \( h_c =  1 \) still existing in the unbroken regime. Also, when $\gamma=K$, $h_c=1$ is a critical point, while $h_e=h<1$ are the exceptional points.
Importantly, this non-Hermitian Hamiltonian can be physically realized through reservoir engineering in the no-jump limit of the Markovian master equation \cite{open_quan_book}, where the system is described by an $XX$ Hamiltonian with $KSEA$ interaction \cite{kaplan_jpb_1983,shekhtman_prl_1992}.

 \begin{figure}
    \centering
    \includegraphics[width=\linewidth]{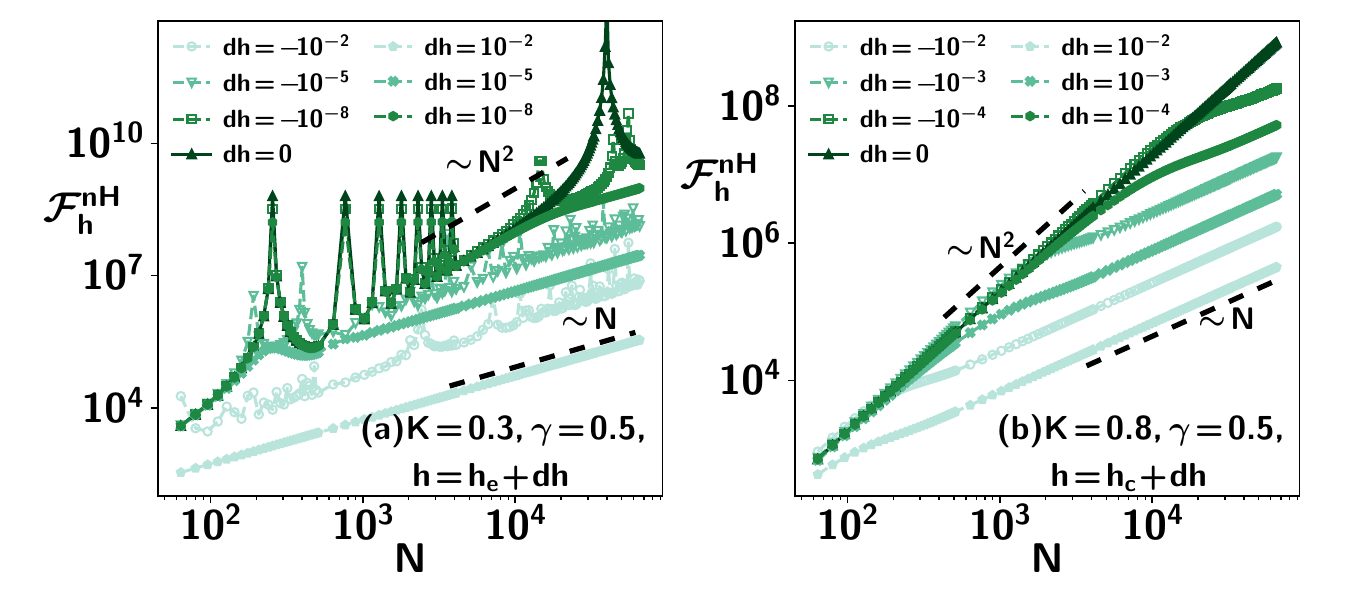}
    \caption{\textbf{Scaling of Fisher information around critical points.}  QFI (ordinate) against the system-size (abscissa) for different magnetic field strength around the exceptional point \(h_e\) (a) and the critical point \(h_c\) (b). Around both critical points, the system provides quadratic scaling with system-size provided the system-size is very high, approaching thermodynamic limit as shown in Theorem 1. All the axes are dimensionless.} 
    \label{fig:qfi_c}
 \end{figure}

%\subsection{Nonlinear scaling of non-hermitian model at quantum critical point}
\section{Critical and exceptional point quantum sensing: super-Heisenberg scaling }
\label{sec:criticalpointscaling}

It was demonstrated that when the system 
%criticality can enhance the quantum advantage in sensing. Specifically, at the critical point—where the system 
undergoes transitions from a gapped to a gapless phase by tuning a system parameter like the magnetic field,  \(h\), resulting in long-range correlations,  QFI, \(\mathcal{F}_{\theta=h}^H\), scales as \(N^2\), with \(N\) being the system-size. This naturally raises the question: ``Can non-Hermitian systems provide the same scaling as the Hermitian one?". Here we answer this question affirmatively by demonstrating that the Heisenberg scaling can be retained
not only at the quantum critical point but also at the exceptional point in a non-Hermitian system.
%in the transition point from gapped to gapless region as well as in the 

Before establishing the above result, %we here focus on the estimation of a single parameter, namely the magnetic field strength, i.e., $\theta = h$, to explore the sensing capabilities of the $iKSEA$ model. 
%The Hamiltonian $H^{iKSEA}$, for a given non-Hermitian anisotropy $\gamma$ and KSEA interaction strength $K$, exhibits two significant values of the magnetic field strength.  One of these is the critical point, $h_c = 1$, at which the system undergoes a phase transition characterized by a non-defective degeneracy. Additionally, the system features an exceptional point at $h_{ep} = \sqrt{1 - \gamma^2 + K^2}$, which exists only when $\gamma > K$.  To investigate the applicability of this model in quantum sensing tasks, we consider the right eigenvector corresponding to the ground state of the system, normalized using the Dirac norm, aligning with the no-click limit assumption. However, 
let us first note that  defining the ground state of such a non-Hermitian model possesses challenges due to the presence of imaginary eigenvalues,  although   it is straightforward to define in the unbroken regime. 
%the ground state. Conversely, 
In the broken phase, since eigenvalues may acquire imaginary parts, 
%In this case,
we select the eigenstate associated with the eigenvalue that has the maximum imaginary part. This choice is justified because such a state represents the steady state of non-Hermitian evolution, given that, in the long-time limit ($t \to \infty$), the system collapses into this state.

%{\it Linear scaling exponent at the quantum critical point in the non-Hermitian model without complex spectra}

\begin{theorem} The quantum Fisher information in the ground state of the non-Hermitian \(iKSEA\) model exhibits Heisenberg scaling (grows quadratically with the increase of system-size)  at  the critical  and exceptional points in the thermodynamic limit.
\end{theorem}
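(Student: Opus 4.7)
My plan is to exploit the exact diagonalisability of $H^{iKSEA}$ already noted in Appendix~\ref{appendix:exactdiag}. First, I apply Jordan--Wigner followed by a Fourier transform to decouple the chain into independent $(k,-k)$ momentum sectors, in each of which the Hamiltonian reduces to a $2\times 2$ non-Hermitian matrix $H_k$ with diagonal entries $\pm(h-\cos k)$ and off-diagonal entries built from the combinations $(K\pm i\gamma)\sin k$. Diagonalising $H_k$ gives a complex dispersion $\varepsilon_k(h,\gamma,K)$ whose squared modulus vanishes at $k=0$ for $h=h_c=1$ and at some finite momentum $k^{\ast}$ for $h=h_e=\sqrt{1+\gamma^2-K^2}$ when $K<\gamma$. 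After Dirac normalisation of the right eigenvectors, the many-body ground state factorises as a BCS product $|\Phi_0(h)\rangle=\prod_{k>0}[u_k(h)|0_k 0_{-k}\rangle+v_k(h)|1_k 1_{-k}\rangle]$, with $(u_k,v_k)$ fixed by a non-Hermitian Bogoliubov angle $\theta_k(h)$.

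Substituting this product into Eq.~(\ref{eq:qfi_def}) turns the QFI into an additive momentum sum,
\begin{equation}
\mathcal{F}^{nH}_h \;=\; \sum_{k>0} f_k(h), \qquad f_k(h) \;\propto\; \frac{|(K+i\gamma)\sin k|^2}{|\varepsilon_k(h)|^4},
\end{equation}
up to slowly varying prefactors coming from the Dirac norm, the $\mathrm{Re}$ prescription, and the subtracted piece $|\langle d_h\Phi_0|\Phi_0\rangle|^2$, all of which I would verify contribute only at subleading order. The theorem then reduces to a scaling analysis of $\sum_{k>0}|\sin k|^2/|\varepsilon_k|^4$ at the two special values $h=h_c$ and $h=h_e$.

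The infrared analysis is now standard. Near $h_c=1$, short expansions give $(h-\cos k)\sim k^2/2$ while $|(K\pm i\gamma)\sin k|\sim k$, so $|\varepsilon_k|^2\sim k^2$ as $k\to 0$ and $f_k\sim 1/k^2$. With the discrete grid $k_n=2\pi n/N$ and effective infrared cutoff $k\sim 1/N$, one gets $\sum_n(N/n)^2\sim N^2$. Near $h_e$, the defective structure of $H_k$ at $k^{\ast}$ enforces the characteristic square-root branching of the non-Hermitian dispersion, $\varepsilon_k\sim\sqrt{k-k^{\ast}}$, so $|\varepsilon_k|^4\sim(k-k^{\ast})^2$ and $f_k\sim 1/(k-k^{\ast})^2$ with $|\sin k^{\ast}|$ of order unity; the same rescaling argument with cutoff $|k-k^{\ast}|\sim 1/N$ delivers $\mathcal{F}^{nH}_{h=h_e}\sim N^2$, which is the Heisenberg scaling claimed in Theorem~1.

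The main obstacle I anticipate is the non-Hermitian delicacy at the exceptional point itself, where the $2\times 2$ block becomes defective and both $\theta_k$ and the Dirac norm of its right eigenvectors fail to be smooth at $k^{\ast}$. To handle this I would (i) work with the resolvent $(H_k-E)^{-1}$ away from the spectrum and extract $|d_h\Phi_0\rangle$ from a small contour integral, which stays well defined even across the EP, and (ii) verify that the prescription of selecting the eigenstate with the largest imaginary part of the eigenvalue coincides, in the limit $h\downarrow h_e$, with the unique unbroken-phase ground state, so that the $N^2$ scaling is recovered consistently from both sides of the exceptional transition.
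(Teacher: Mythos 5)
Your overall strategy---Jordan--Wigner plus Fourier decoupling, an additive per-mode QFI, and an infrared scaling analysis with momentum cutoff $\sim 1/N$---is exactly the route the paper takes, and your treatment of the gap-closing point $h_c=1$ (where $\sin^2 k\sim k^2$ in the numerator and $|\varepsilon_k|^4\sim k^4$ in the denominator combine to give $1/k^2$ and hence $N^2$) reproduces the paper's Case I correctly in scaling. However, the exceptional-point half of your argument contains two genuine errors that happen to cancel. First, your per-mode ansatz $f_k\propto|\sin k|^2/|\varepsilon_k|^4$ is the \emph{Hermitian} Bogoliubov-angle formula; for the Dirac-normalized right eigenvectors the correct expression, Eq.~(\ref{eq:QFIreal}) derived in Appendix~\ref{app:qfi_proof}, is $\sin^2\phi_p(\gamma^2-K^2)^2/\bigl[\epsilon^2(\phi_p)(\gamma g_p+\epsilon(\phi_p)K)^2\bigr]$, whose denominator near the EP behaves as $\epsilon^2\times\text{const}$ rather than $\epsilon^4$, because $\gamma g_p+\epsilon K$ remains finite and nonzero at $\phi_p\to\omega_c$. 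The Dirac-norm and non-orthogonality corrections that you propose to ``verify contribute only at subleading order'' are precisely what converts $\epsilon^{-4}$ into $\epsilon^{-2}$; they are leading, not subleading, at the EP. Second, the square-root branching $\varepsilon\sim\sqrt{\epsilon}$ characteristic of an EP refers to the response to the tuning parameter $h$ near $h_e$, not to the momentum dependence at fixed $h=h_e$: there one has exactly $\epsilon^2(\phi)=(1+h_e\cos\phi)^2$, i.e.\ $\epsilon\sim|\phi-\omega_c|$ \emph{linearly}, consistent with the paper's expansion $\epsilon^2(x)=(\gamma^2-K^2)(x-\omega_c)^2+\mathcal{O}((x-\omega_c)^3)$. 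With the correct linear dispersion your formula would give $f_k\sim(k-k^*)^{-4}\sim N^4$, contradicting the theorem; with the correct non-Hermitian per-mode QFI it gives $(k-k^*)^{-2}\sim N^2$. You land on the right exponent only because these two mistakes compensate, so the EP case is not actually proved.

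A smaller omission: the theorem also covers the critical point $h_c=1$ when it lies inside the broken phase ($K<\gamma$), where $\epsilon^2(\phi_p)<0$ and the ``ground state'' is selected by the maximal imaginary part of the eigenvalue. There the per-mode QFI takes a different form, $(\gamma^2-K^2)/(-\epsilon^2\gamma^2)$, and the $N^2$ scaling arises from $-\epsilon^2(\phi_{N/2-1})\sim(\pi/N)^2$; your proposal does not address this sector at all. To repair the proof you would need to actually compute $\langle d_h\psi^-|d_h\psi^-\rangle$ and $\langle\psi^-|d_h\psi^-\rangle$ for the Dirac-normalized eigenvectors in both the real- and imaginary-eigenvalue sectors, as the paper does, rather than import the Hermitian scaling form.
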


\begin{proof}
After the application of the Jordan-Wigner and Fourier transformations, the Hamiltonian decouples into the momentum blocks, i.e., $H^{iKSEA}=\bigoplus_{p=0}^{N/2-1}\mathcal{H}_p$, with 
\begin{equation}
    \mathcal{H}_p=\left[\begin{array}{cc}
-g_p & -\alpha^+_p\\
\alpha^-_p & g_p
\end{array}\right];\quad \begin{array}{c}
g_p = h+\cos\phi_p,\\
\alpha^{\pm}_p= (\gamma\pm K)\sin\phi_p,
\end{array}
\label{eq:ksea_P}
\end{equation}
in the $\{\ket{0}, c_p^\dagger c_{-p}^\dagger\ket{0}\}$ basis. Here $c_p^\dagger$ and $c_p$ are fermionic creation and annihilation operators in the momentum basis, respectively, with $\phi_p=(2p-1)\pi /N $ and \(p \in [1, N/2]\) in the even parity sector (where the ground state of the finite system size is), and the corresponding eigenvalues are $\pm\epsilon(\phi_p)$ where \(\epsilon(\phi_p)=\sqrt{g_p^2-\alpha_p^+\alpha^-_p}\). In order to showcase the scaling behavior of Fisher information of the ground state, we perform a qualitative calculation by expanding the Fisher information around the critical points and show that the highest contribution changes as a square of the system size. To do this, we first perform
%The first step towards that is to do a 
the Taylor expansion of the dispersion relation, especially \(\epsilon^2(x)\) around a particular point, \(x_0\) which is given as
%and $p=0$ is taken separately with $\{\ket{0}, a_0^\dagger a_\pi^\dagger\ket{0}\}$ basis.  where 
\begin{eqnarray}
   && \epsilon^2(x) 
    %& (h+\cos x)^2-(\gamma^2-K^2)\sin^2 x \label{eq:dispersion_relation}\\
    = \epsilon(x_0)^2-2\sin x_0 (h+\tilde{h}\cos x_0)(x-x_0) \nonumber\\
    &&- (\tilde{h}\cos2x_0+h\cos x_0)(x-x_0)^2 + O((x-x_0)^3)
    \label{eq:expansion}
\end{eqnarray}
%when the function is expanded around point $x_0$,
with $\tilde{h}=1+\gamma^2-K^2$.
The state corresponding to the energy $-\epsilon(\phi_p)$ is  $|\psi^-\rangle_p=\frac{1}{\sqrt{\mathcal{A}_p^-}}[u_p^-, v_p^-]^T$, with  $u^-_p=\alpha^+_p, v_p^-=\epsilon(\phi_p)-g_p$ and $\mathcal{A}_p^-$ is the Dirac normalization constant. The  ground state for \(H^{iKSEA}\) is $\Psi(h,\gamma,K)=\bigoplus_p|\psi^-\rangle_p$, and the corresponding Fisher information is obtained  by  summing over the Fisher information from each block, i.e., $\mathcal{F}_h^{nH}(\ket{\Psi})=\sum_p\mathcal{F}_h^{nH}(\ket{\psi^-}_p)$. Note that, if either each $\mathcal{F}_h^{nH}(\ket{\psi^-}_p)$ is bounded with system size $N$ (i.e. finite as $N\to \infty$), or if $\mathcal{F}^{nH}_h(\ket{\psi^-}_p)\sim N$ for a finite number of momentum indexes $p$, in that case,  QFI \(\sim N\) while $\mathcal{F}_h^{nH}(\ket{\psi^-}_p)\sim N^2$ implies Heisenberg scaling of QFI for the ground state. 

\begin{figure}
    \centering
    \includegraphics[width=\linewidth]{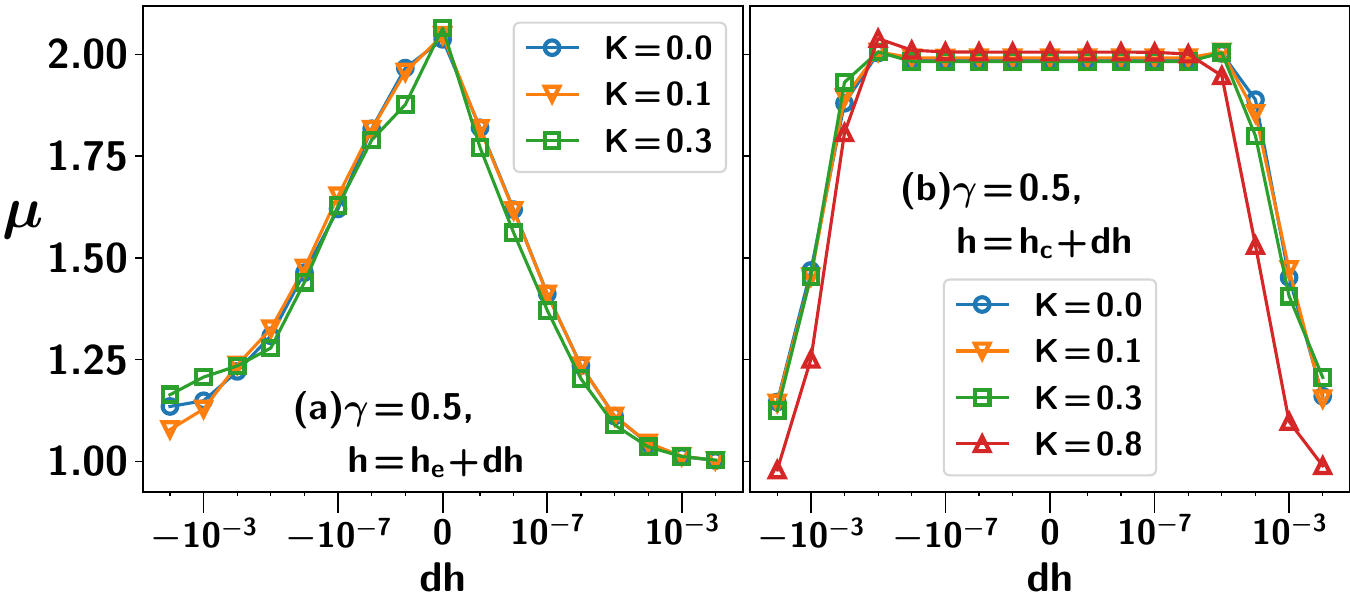}
    \caption{\textbf{The scaling exponent \(\mu\) vs small increment in the magnetic field, \(dh\).} In (a), for a fixed \(\gamma\) and different values of \(K\), we vary \(h= h_c + dh\) while in (b), \(h = h_e + dh\). In both the situations, we observe that when \(h \rightarrow h_c\) or \(h_e\), QFI scales as \(N^2\), attaining Heisenberg scaling.  All the axes are dimensionless.} 
    \label{fig:qfi_scaling}
 \end{figure}
%{\it Scaling of Fisher information.} 
%Now, the scaling of Fisher information (FI) with system size varies depending upon the criticality in the system. This can be classified in two scenarios: (i) SQL (FI $\sim L$) and (ii) HL (FI$\sim L^2$). In case (i), if either each $\mathcal{F}_h(\ket{\psi^-}_p)$ is bounded with system size $L$, or if $\mathcal{F}_h(\ket{\psi^-}_p)\sim L$ for finite number of momentum indexes $p$, in that case FI \(\sim L\). On the other hand, when for any $p$, $\mathcal{F}_h(\ket{\psi^-}_p)\sim L^2$, then $\mathcal{F}_h(\ket{\Psi})\sim L^2$. 

%As the total phase space can be devided in two scenario, unbroken phase where all the eigenvalues of the Hamiltonian is real and the broken phase where eigenvalues are imaginary. Hence we calculate the Fisher information for this two cases which is as follows:

{\it Real energy eigenvalues.} When $\epsilon^2(\phi_p)> 0$, i.e., the eigenvalues are real for momentum index $p$, %the normalization constant
$\mathcal{A}_p^-=2K(\gamma+K)\sin^2\phi_p+2g_p(g_p-\epsilon(\phi_p))$. Since both $u_p^-, v_p^-\in \mathbb{R}$,  $\langle\psi^-|d_h\psi^-\rangle_p=0$, and 
\begin{eqnarray}
    &&\mathcal{F}_h^{nH}(\ket{\psi^-}_p)= (u_p^-v_p^-/(\epsilon(\phi_p)\mathcal{A}^{-}_p))^2\nonumber\\
    &&=\frac{\sin^2\phi_p(\gamma^2-K^2)^2}{\epsilon^2(\phi_p)(\gamma g_p+\epsilon(\phi_p)K)^2}\equiv \mathcal{F}_h^{nH(r)}(\phi_p). 
    \label{eq:QFIreal}
\end{eqnarray}
{\it Imaginary eigenvalues.} For $\epsilon^2(\phi_p)<0$, %i.e. the eigenvalues are imaginary for momentum index $p$,
the QFI for each momentum block turns out to be
\begin{equation}
    \mathcal{F}^{nH}_h(\ket{\psi^-}_p)
    =\frac{(\gamma^2-K^2)}{-\epsilon^2(\phi_p)\gamma^2}\equiv \mathcal{F}_h^{nH(im)}(\phi_p).
    \label{eq:QFIim}
\end{equation}

%As we are working on the non-Hermiitian KSEA model, the division of Hilbert space is unbroken and broken upon the strength of parameter \(K\) and \(\gamma\). Here we prove the scalability of FI for these two cases. 

\textbf{Case I} : $K>\gamma$. The system remains in the unbroken phase for all values of \( h \), and the QFI is  \(\mathcal{F}^{nH}_h(\ket{\Psi}) = \sum_p \mathcal{F}_h^{nH(r)}(\phi_p)\). To achieve a higher scaling of \( \mathcal{F}_h^{nH(r)}(\phi_p) \), we note that the denominator in Eq.~(\ref{eq:QFIreal}) can approach zero for certain values of \( p \), which results in \(\mathcal{F}_h^{nH}(\ket{\Psi})\) becoming large but finite for finite system size $N$.  In this scenario as $N\to\infty$, both the conditions \( \epsilon(\phi_p) \to 0 \) and  \( \gamma g_p + \epsilon(\phi_p)K \to 0,\) are satisfied only at \( h\equiv h_c = 1 \) and \( p = N/2 - 1 \), where \( \phi_{N/2-1} \to \pi \) as \( N \) increases. This implies that   \( \Delta\phi \equiv \pi - \phi_{N/2-1} = \frac{\pi}{N} \to 0\) as \(N\to \infty\). To determine the highest scaling, we perform a Taylor series expansion around \( x_0 = \pi \) with $h_c=1$, yielding   \( \epsilon^2(x) = (K^2 - \gamma^2)(\pi - x)^2 + \mathcal{O}((\pi - x)^4)\) and \((\gamma h_c+\gamma\cos x+\epsilon(x)K)^2 = K^2 (K^2 - \gamma^2) (\pi - x)^2 + \mathcal{O}((\pi - x)^3)\), giving \(\frac{\sin^2\phi_p (K^2-\gamma^2)^2}{(\gamma h_c+\gamma\cos\phi_p+\epsilon(\phi_p)K)^2} = \frac{K^2-\gamma^2}{K^2} + \mathcal{O}\left(\frac{\pi}{N}\right)\). 
Substituting this into \( \mathcal{F}_h^{nH(r)}(\phi_p) \), we obtain  
\begin{eqnarray}
   \mathcal{F}_h^{nH(r)}(\phi_{N/2-1}) &=& \frac{\frac{K^2-\gamma^2}{K^2} + \mathcal{O}\left(\frac{\pi}{N}\right)}{(K^2-\gamma^2)\left(\frac{\pi}{N}\right)^2+\mathcal{O}\left(\frac{\pi}{N}\right)^3} \nonumber \\
    &=& \frac{1}{K^2} \left(\frac{N}{\pi}\right)^2 + \mathcal{O}(N),
\end{eqnarray}  
%This result 
which clearly demonstrates that as \( h_c = 1 \), 
\(
\mathcal{F}^{nH}_{h\to1}(\ket{\Psi}) \sim N^2
\) (see also Fig. \ref{fig:qfi_c}(b)). 

%\textcolor{blue}{Please provide a figure of QFI scaling with N with for \(K>\gamma\).}

\textbf{Case II} : $K<\gamma$. In this domain, the system exhibits both unbroken and broken phases. When \( h > h_e = \sqrt{\tilde{h}} \) with \(\tilde{h}=1+\gamma^2-K^2\), \(\epsilon^2(\phi_p) > 0\) for all momentum indices \( p \) , indicating that the system is in the unbroken regime. At \( h =  h_e \), the condition \(\epsilon(\phi_p) \to 0\), i.e., both real and imaginary part of \(\epsilon(\phi_p)\) approach zero. Such condition is satisfied at \( \phi_p \to \omega_c \), where \( \cos\omega_c = -h_e^{-1} \), while for finite system size \( N \), \(\epsilon^2(\phi_p) > 0\) still holds. Given that \( h_e + \tilde{h} \cos\omega_c = 0 \), a series expansion of \(\epsilon^2(x)\) around \( x_0 = \omega_c \) in Eq.~(\ref{eq:expansion}) yields   \( \epsilon^2(x) = (\gamma^2 - K^2)(x - \omega_c)^2 + \mathcal{O}((x - \omega_c)^3), \) where \( \phi_p - \omega_c \sim \frac{\pi}{N} \) for some \( p = p_c \). Since all other terms in Eq.~(\ref{eq:QFIreal}) remain finite and nonvanishing, we obtain  
\begin{equation}
    \mathcal{F}_h^{nH(r)}(\phi_{p_c}) \sim \frac{1}{\gamma^2 - K^2} \left(\frac{N}{\pi}\right)^2,
\end{equation}  
which again confirms the scaling of  \( N^2 \) at the exceptional point.  

In the broken regime, where \( h < h_e \), the function \(\epsilon(x)\) vanishes at two \( \omega_{\pm} \), satisfying the relation   \( h_e \cos\omega_{\pm} = -\left(h/h_e\right) \pm \sqrt{(\gamma^2 - K^2)(1 - (h/h_e)^2)} \). Additionally, since \(\sin\omega_{\pm} (h + \tilde{h} \cos\omega_{\pm}) \neq 0\), a series expansion of $\epsilon^2(x)$ around $\omega_{\pm}$ gives \(\epsilon^2(x) \sim (x - \omega_{\pm})\), leading to   \( \mathcal{F}_h^{nH(r,im)}(\phi_p) \sim N \quad \text{as} \quad \phi_p \to \omega_{\pm} \quad \text{with increasing } N \).  At \( h=h_c = 1 \),  \(\epsilon(\phi_{N/2-1}) \to 0\) and \(\sin\omega_-\to 0\) as \( \phi_{N/2-1} \to \pi \), leading to  
\begin{eqnarray}
    -\epsilon(\phi_{N/2-1}) \sim \left(\frac{\pi}{N}\right)^2 \implies \mathcal{F}^{nH(im)}_h(\phi_{N/2-1}) \sim N^2.
\end{eqnarray}  
Thus, a quantum advantage is achieved whenever both conditions, \(\epsilon^2(x_0) = 0\) and   \( \left. \frac{d}{dx} \epsilon^2(x) \right|_{x = x_0} = 0 \),   are satisfied for some \( x \in [0, \pi] \) which happens only at the critical  and the exceptional points (see Fig. \ref{fig:qfi_c}(a) ).

\end{proof}

In contrast to \(K>\gamma\), in the region \(K<\gamma\), we obtain criticality-enhanced scaling in both gapped to gapless transition point (see Fig. \ref{fig:qfi_scaling}(b)) as well as in EP (see Fig. \ref{fig:qfi_scaling}(a)) which highlights the usefulness of the non-Hermitian system over the Hermitian ones. At this point, we can ask -- ``{\it can this Heisenberg limit be crossed via tweaking the parameters of the non-Hermitian Hamiltonian?}"  We answer it affirmatively by proving below that we can surpass HL scaling when the system-size, \(N\), is moderate and the system parameters are tuned appropriately. 

%we find that {\it super-Heisenberg scaling}, i.e., \(\mathcal{F}_h^{nH} \sim N^{\mu}\) with \(\mu >2\) can be obtained if we concentrate on the moderate system-size which is currently achievable in experiments as $K\to\gamma$ and $h=1$.  
%From the QFI expression derived in Eq. (\ref{eq:QFIreal}), 
%we find the conditions on \(h, \gamma\) and , \(K\) 
%for which it can be demonstrated that \(\mathcal{F}_h^{nH} \sim N^{\mu}\) when \(\mu>2\). In particular,
%we find that when \(K-\gamma = \kappa \) decreases, the scaling of \(N^6\) for QFI can be obtained when \(N \leq 200\). Note, however, that the QFI value decreases sharply which finally vanishes at \(K \rightarrow \gamma\) (see Fig. \ref{fig:N6_scaling}). 
%\adi{Write down the condition and L limit, give the plots. } 
%From this analysis, we arrive at the following Corollary.  

 \begin{corollary} When \(K \rightarrow \gamma\),  super-Heisenberg scaling in QFI can be obtained provided  the system-size is moderate.
 \end{corollary}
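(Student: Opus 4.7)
The plan is to refine the expansion used in Case~II of Theorem 1 by treating $\delta = \gamma - K > 0$ as a small control parameter that sets a pre-asymptotic window in the system size. I will evaluate the real-eigenvalue formula of Eq.~(\ref{eq:QFIreal}) at $h = h_c = 1$ on the momentum block nearest to $\pi$, namely $\phi_p = \pi - y$ with $y = \pi/N$, since this is exactly the block identified as critical in the $K > \gamma$ analysis. The key observation is that the standard quadratic expansion of Eq.~(\ref{eq:expansion}) around $x_0 = \pi$ at $K = \gamma$ yields only zero for the coefficient of $(\pi - x)^2$ when $h=1$, so I must push the series to one higher order. Using $1 + \cos x = 2\sin^2(y/2) \simeq y^2/2$ and $\sin^2 x \simeq y^2$, I obtain
\[
\epsilon^2(\pi - y) \simeq \frac{y^4}{4} - 2\gamma\delta\,y^2 + O(y^6).
\]
The competition between the quartic and quadratic pieces defines a crossover scale $N^{\ast} := \pi/(2\sqrt{2\gamma\delta})$, separating a quartic-dominated pre-asymptotic window $N \ll N^{\ast}$ (relevant to the corollary) from the asymptotic regime $N \gg N^{\ast}$ analysed in Theorem 1.

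In the pre-asymptotic window $y^{2} \gg 8\gamma\delta$ I substitute $\epsilon^2 \simeq y^4/4$, $g_p \simeq y^2/2$, and, using $K \simeq \gamma$, $\gamma g_p + \epsilon K \simeq (\gamma+K)\,y^2/2 \simeq \gamma y^2$, together with $(\gamma^2 - K^2)^2 \simeq 4\gamma^2\delta^2$ and $\sin^2\phi_p \simeq y^2$, into Eq.~(\ref{eq:QFIreal}). These estimates collapse the QFI of the critical block to
\[
\mathcal{F}_h^{nH(r)}(\phi_p) \simeq \frac{16\,\delta^2}{y^6} = \frac{16\,\delta^2\,N^6}{\pi^6}.
\]
For every $\phi_p$ bounded away from $\pi$ the denominator factor $\epsilon^2$ stays $O(1)$ while $(\gamma^2-K^2)^2 = O(\delta^2)$ in the numerator, so each such block contributes $O(\delta^2)$ and the cumulative sum over the $O(N)$ remaining blocks is $O(N\delta^2)$, dwarfed by the single-block $N^6$ term. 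Hence $\mathcal{F}^{nH}_h(|\Psi\rangle) \sim N^6$ throughout the pre-asymptotic window. Once $N \gg N^{\ast}$ the quadratic term takes over, $\epsilon^2 \simeq -2\gamma\delta\,y^2$ turns imaginary, and Eq.~(\ref{eq:QFIim}) gives $\mathcal{F}_h^{nH(im)} \simeq 1/(\gamma^2 y^2) \sim N^2$, smoothly reconnecting with Heisenberg scaling of Theorem 1.

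The main obstacle is controlling the competition of the two small parameters $\delta$ and $1/N$: both the numerator $(\gamma^2 - K^2)^2$ and the denominator factor $[\gamma g_p + \epsilon K]^2$ vanish as $\delta \to 0$, and I need to verify that their ratio stays bounded and that the sub-leading $O(\delta)$ and $O(y^2)$ corrections in $\gamma g_p + \epsilon K \simeq (\gamma+K)y^2/2 - 2\gamma^2\delta$ do not spoil the $y^{-6}$ scaling. A secondary but delicate point is the precise labelling of momenta $\phi_p = (2p-1)\pi/N$ with $p \in [1, N/2]$: only one block realises the minimal $y \sim \pi/N$, and one must confirm that no neighbouring block secretly contributes at the same order. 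Because $N^{\ast} \to \infty$ as $\delta \to 0^{+}$, the corollary is necessarily pre-asymptotic and should be phrased as: for any fixed moderate $N$, a sufficiently small $\delta$ places the system in the window $N \ll N^{\ast}$ where $\mathcal{F}^{nH}_h \sim N^6$, with the exponent crossing back to $2$ in the strict thermodynamic limit at fixed $\delta$.
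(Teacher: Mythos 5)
Your proposal is correct and follows essentially the same route as the paper's proof: you isolate the momentum block nearest $\phi_p=\pi$ at $h=h_c=1$, exploit the competition between the quartic piece $(1+\cos x)^2\simeq y^4/4$ and the quadratic piece $(K^2-\gamma^2)\sin^2 x\simeq \pm 2\gamma|\kappa|\,y^2$ of $\epsilon^2$, and land on the same leading block contribution $16\kappa^2 N^6/\pi^6$ (you work on the $K<\gamma$ side, which the paper relegates to Note 1, but the mechanics are identical to the $K>\gamma$ case treated in the Corollary). Two places where your treatment is actually sharper than the paper's: (i) you identify the correct crossover scale $N^{\ast}\sim \pi/\sqrt{8\gamma|\kappa|}$, i.e.\ the pre-asymptotic window is $\pi/N\gg\sqrt{|\kappa|}$ rather than the weaker condition $\pi/N\gg|\kappa|$ stated in the paper; in the intermediate regime $|\kappa|\ll\pi/N\ll\sqrt{|\kappa|}$ the quadratic term already dominates and a direct substitution into Eq.~(\ref{eq:QFIreal}) shows the block QFI has already collapsed to $1/(\gamma^2 y^2)\sim N^2$, so the paper's condition is necessary but not sufficient for the $N^6$ behavior; (ii) you explicitly bound the $O(N)$ non-critical blocks by $O(N\kappa^2)$ in aggregate, a step the paper leaves implicit (the near-$\pi$ neighbors $\phi_{N/2-j}$ also scale as $N^6$ but with coefficients $\propto (2j-1)^{-6}$, whose sum converges, so they do not alter the order).
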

\begin{proof}
     Let us consider the QFI in the regime $K>\gamma$, with \(K\to\gamma\), as given in Eq. (\ref{eq:QFIreal}). We use this relation and show that when \(K-\gamma=\kappa\to 0^+\), the state in which QFI is calculated has to be prepared at \(h=h_c=1\).  Note that we keep finite system size $N$ here, such that $\frac{\pi}{N}\gg\kappa$. Therefore, Taylor expansion of terms of Eq. (\ref{eq:QFIreal}) around $K=\gamma$ gives $(\gamma^2-K^2)^2=4\gamma^2\kappa^2+\mathcal{O}(\kappa^3)$, $\epsilon^{-2}(x,\kappa)=g^{-2}(x)+\mathcal{O}(\kappa)$ and $(\gamma g(x)+\epsilon(x,\kappa)K)^{-2}= \left(2\gamma g(x)\right)^{-2} + \mathcal{O}(\kappa)$, with $g(x)=h_c+\cos x$.
Therefore, taking only the leading terms in $\kappa$, the QFI is given by
     % \begin{eqnarray}
     %     \mathcal{F}_h^{nH(r)}(\phi_p, \kappa)&=&\frac{4\gamma^2\sin^2\phi_p\kappa^2 +\mathcal{O}(\kappa^3)}{(\gamma^2\sin^2\phi_p+g^2(\phi_p))^2\kappa^2 +\mathcal{O}(\kappa^3)}\nonumber\\
     %     &=&\frac{4\gamma^2\sin^2\phi_p}{(\gamma^2\sin^2\phi_p+g^2(\phi_p))^2} +\mathcal{O}(\kappa).
     % \end{eqnarray}
       \begin{align}
           \mathcal{F}_h^{nH(r)}(\phi_p, \kappa)
         &=\frac{4\gamma^2\sin^2\phi_p\kappa^2}{4\gamma^2 g^4(\phi_p)} +\mathcal{O}(\kappa^3),
         \label{eq:superscaling}\\
         \mathcal{F}_h^{nH(r)}(\phi_{L/2-1}, \kappa)&=\frac{\sin^2(\pi-\pi/N)\kappa^2}{(1+\cos(\pi-\pi/N))^4}+\mathcal{O}(\kappa^3).\nonumber
       \end{align}
    In the last line we use the maximum contribution on the coefficient of \(\kappa^2\) from each momentum, which is at \(p=N/2-1\).
    % Now since, $g(x)=(\pi-x)^2/2+\mathcal{O}((\pi-x)^3)$,
    Now using the series expansion of sine and cosine series, the denominator scales as $(\pi/N)^8$, while the numerator scales as $(\pi/N)^2$,  keeping $\pi/N\gg\kappa$ as $\phi_p\to \pi$. Therefore, $\mathcal{F}_h^{nH(r)}(\phi_p\to\pi, \kappa\to0)|_{\pi-\phi_p\gg\kappa} \sim (N/\pi)^6$, giving super-Heisenberg scaling. Note that in the thermodynamic limit, as $\phi_p\to\pi$ for arbitrary small, but finite $\kappa$, Eq. (\ref{eq:superscaling}) is not valid. For smaller values of $\kappa$, the scaling of $N^6$  is valid for larger $N$, while the Fisher information value decreases with the decrease of $\kappa$, as shown in Fig. \ref{fig:N6_scaling}. In the thermodynamic limit $\mathcal{F}_h^{nH(r)}=0$ at $h_c=1$ and $\gamma=K$, keeping the Theorem 1 valid in the thermodynamic limit. 
\end{proof}

{\bf Note 1.} While the above analysis is true for $K>\gamma$, when $K<\gamma$,  $\kappa\to 0^-$ and  $h_c=1$,  we observe that the QFI again scales as $N^6$ for moderate system-size and provides $N^2$-scaling in the thermodynamic limit.
     
%      The approximated QFI is given as
%     \begin{equation}
%         \mathcal{F}_h^{NH}(\ket{\Psi^-}_p)\approx \frac{\sin^2\phi_p(\gamma^2-K^2)^2}{(1+\cos\phi_p)^4(\gamma-K)^2}.
%     \end{equation}
%     where we use approximation that \(\epsilon(\phi_p)\approx 1+\cos\phi_p\). Now the scaling for large but finite size can be generalized more which is given as
%     \begin{align}
%         \mathcal{F}_h^{NH}(\ket{\Psi})&\nonumber\approx \sum_p\frac{\sin^2\phi_p(\gamma+K)^2}{(1+\cos\phi_p)^4}\\&\nonumber\approx (\gamma+K)^2\frac{\sin^2(\pi-\pi/N)}{(1+\cos(\pi-\pi/N))^4}\\&\nonumber=(\gamma+K)^2\frac{\sin^2(\pi/N)}{(1-\cos(\pi/N))^4}\\&\nonumber\approx 16(\gamma+K)^2\frac{\pi^2/N^2}{\pi^8/N^8}\\&=\frac{16}{\pi^6}(\gamma+K)^2N^6,
%     \end{align}
%     where in the second line we use that fact the maximum contribution appear from the limit as \(\phi_p\to \pi\) which is obtained for the momentum value \(p=N/2\). In the last line we utilize the Taylor expansion of the sine and cosine functions upto second order and obtain that \(\mathcal{F}_h^{NH}(\ket{\Psi})\sim N^6\).
% \end{proof}

{\bf Note 2.} In the Corollary, we show that when \( K \to \gamma \), at criticality (\( h_c = 1 \)), super-Heisenberg scaling is achieved. Although the value of QFI  is  very small and vanishes when \( K = \gamma \). On the other hand in case of \( K \to \gamma \), its value increases with \( N \), ultimately reaching super-Heisenberg scaling for moderate values of \( N \), as illustrated in Fig. \ref{fig:N6_scaling}. On the other hand, as the system size \( N \) increases, the scaling of the QFI saturates to the Heisenberg scaling as shown in Theorem 1. 
%This indicates that, even for finite system sizes, the scaling exceeds Heisenberg scaling, as shown in the
This is because, in the thermodynamic limit, \(\pi/N \gg \kappa\) does not hold as \(\phi_p \rightarrow \pi\). 
%not only the term \( \phi_p \to \pi \) contributes, but other momentum terms also become significant, which reduces the scaling from super-Heisenberg to Heisenberg.
Note that although this result is similar to that obtained for the Hermitian model \cite{Mondal2024Jul} (see Appendix \ref{app:sup_heisenberg}), 
%However, in the thermodynamic limit, the scaling again transitions to the Heisenberg scaling scenario. 
%T Here we like to point out that unlike the Hermitian scenario, here 
the super-Heisenberg scaling of the QFI in the non-Hermitian model  arises from the competition between the Hermiticity and non-Hermiticity parameters which is not the case for the Hermitian system. The above result further demonstrates that the engineered bath can be used to tune the scaling of the QFI \cite{Agarwal2023May}.

\begin{figure}
    \centering
    \includegraphics[width=\linewidth]{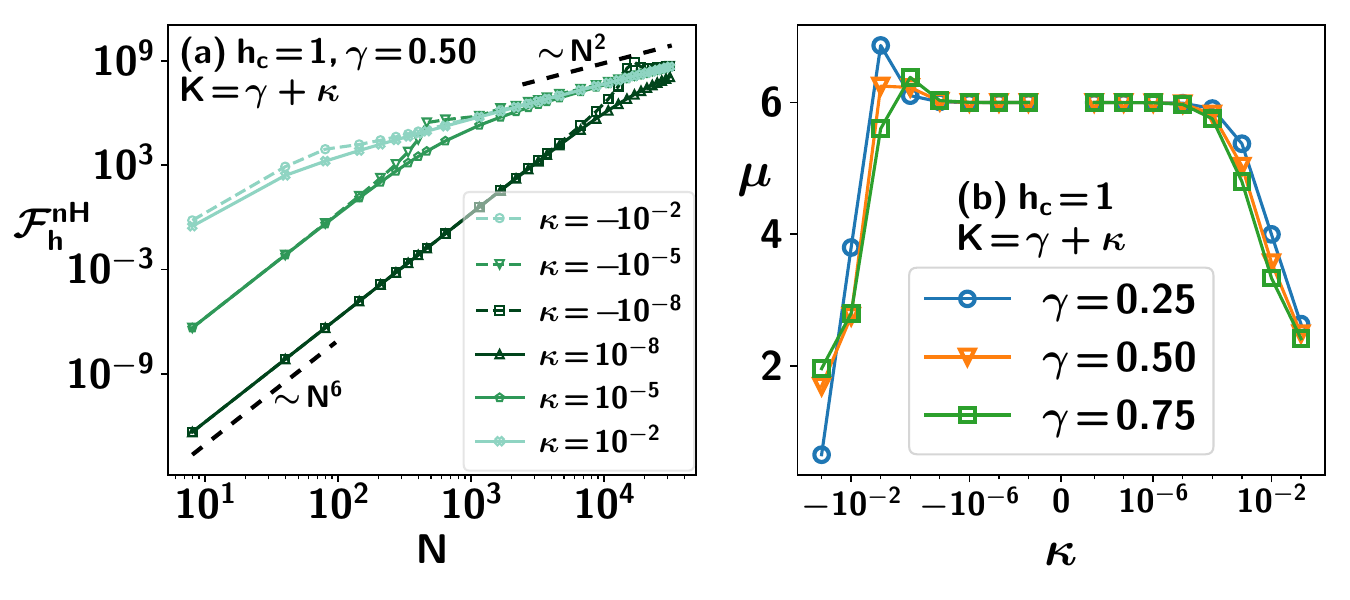}
    \caption{(a) \(\mathcal{F}_h^{nH}\) (ordinate) against N (abscissa) for different  \(K-\gamma = \kappa\) values. As seen from the figure, \(\kappa\) is small, the QFI scales as \(N^6\). (b) The scaling exponent, \(\mu\) of QFI (ordinate) with respect to \(\kappa\) (abscissa) for different \(\gamma\) values with \(h_c =1\). 
    Clearly, when \(\kappa\) small and \(N\) is moderate, the QFI beats Heisenberg scaling as proven in Corollary. Crucially, this super-Heisenberg scaling is achieved because of the system's conflict between hermiticity and non-Hermiticity features, in contrast to the Hermitian model.
    All the axes are dimensionless.} 
    \label{fig:N6_scaling}
 \end{figure}

\section{Probing  nonlinear scaling dynamically in the broken region}
\label{sec:dynamicsexceptionalpoint}

In static quantum metrology, the effectiveness of a strategy often %leverage 
depends on the system's criticality, where the quantum Fisher information exhibits Heisenberg scaling. Alternatively, a probe can be dynamically prepared, starting from an initial state that contains no information about the unknown parameters. As the system evolves under a parameter-dependent evolution operator, the QFI increases, displaying scaling behavior with respect to both the time and the system-size. 
 Let us now investigate whether non-Hermiticity can improve QFI scaling or provide atleast the same scaling as obtained via its Hermitian analogs. Note that it was previously proven \cite{ding_prl_2023} that time scaling cannot be improved with non-Hermiticity of the evolving operator which we counter in this work.

We here focus on the dynamics governed by \(\mathcal{RT}\)-symmetric non-Hermitian Hamiltonian, in Eq. (\ref{eq:Hamil}), starting
%In our case, we take the initial state a fully product state, e.g., we take 
 from the initial state, \(\ket{\Psi_0}=\ket{0\ldots 0}\) 
% and \(\ket{\Psi_0}=\ket{1\ldots 1}\) 
which can be effectively prepared as the ground state of local magnetic field in the $z$-direction. 
%the system is evolved according to \(H_\)
%We evolve the initial state with the non-Hermitian Hamiltonian, to study the quantum Fisher information QFI of the resulting dynamical state. 
The initial state can be written in terms of the fermionic creation and annihilation operators, i.e., $\ket{\Psi_0}=\bigoplus_p\ket{0}_p \equiv\bigoplus_p\ket{\psi_0}_p$, while 
% and $\ket{\Psi_1}=\bigoplus_pa_p^\dagger a_{-p}^\dagger\ket{0}_p \equiv\bigoplus_p\ket{\psi_1(t=0)}_p$ for even $N$.
the evolving operator in the $\{\ket{0}, a_p^\dagger a_{-p}^\dagger\ket{0}\}$ basis can be written as
\begin{equation}
    U_p(t) =\exp(-i\mathcal{H}_p t) = \cos(\epsilon_pt)\mathbb{I}_2 - i\frac{\sin(\epsilon_pt)}{\epsilon_p}\mathcal{H}_p,
\end{equation}
for each momentum index $p$, where $\mathbb{I}_2$ is an $2\times2$ identity operator. The Hermitian case can be easily retrieved by $\gamma\to-i\gamma$ which is $\mathcal{H}_p(h,-i\gamma,K)$ in Eq. (\ref{eq:ksea_P}). Unlike the Hermitian case, $U_p(t)$ is not a unitary operator as $\mathcal{H}_p$ is non-Hermitian. Hence, the dynamical state, and its differential with magnetic field $h$, is given as
\begin{gather}
    \ket{\psi_t}_p=\mathcal{A}(t)U_p(t)\ket{\psi_0}_p, \nonumber\\
    d_h\ket{\psi_t}_p = d_h\mathcal{A}(t)U_p(t) \ket{\psi_0}_p + \mathcal{A}(t) d_hU_p(t)\ket{\psi_0}_p, \nonumber
    % ;\quad \ket{\psi_k(t)}_p=\frac{\ket{\psi_k(t)}_p}{\sqrt{\langle \psi_k(t)_p|\psi_k(t)_p\rangle}}.
\end{gather}
where $d_h\equiv\frac{d}{dh}$ is the differential operator. The normalization of the resultant dynamical states $\ket{\psi_t}_p$ for each momentum $p$, is also performed at each time $t$, with time-dependent normalization constant, $\mathcal{A}(t)=\sqrt{_p\langle\psi_{t}\ket{\psi_t}_p}$. For the dynamical state,  we compute $\mathcal{F}^{nH}_h(\ket{\Psi_t})=\sum_p\mathcal{F}_h^{nH}(\ket{\psi_t}_p)$, with
\begin{eqnarray}
    \mathcal{F}_h^{nH}(\ket{\psi_t}_p) = 4 \mathcal{A}^2(t) \Big( {_p}\bra{\psi_0} d_h U_p^\dagger d_h U_p\ket{\psi_0}_p\nonumber \\ - \mathcal{A}^2(t)| {_p}\bra{\psi_0}  U_p^\dagger d_h U_p\ket{\psi_0}_p|^2  \Big).
\end{eqnarray}
In the dynamical scenario, both the time and the system-size can be accounted as resource, with the SQL given by $\mathcal{F}\sim Nt^2$, which can be achieved by an ensemble of $N$ non-interacting particles coherently evolving with local magnetic field from a product state \cite{puig2024arxiv}. 

\begin{figure}
    \centering
    \includegraphics[width=\linewidth]{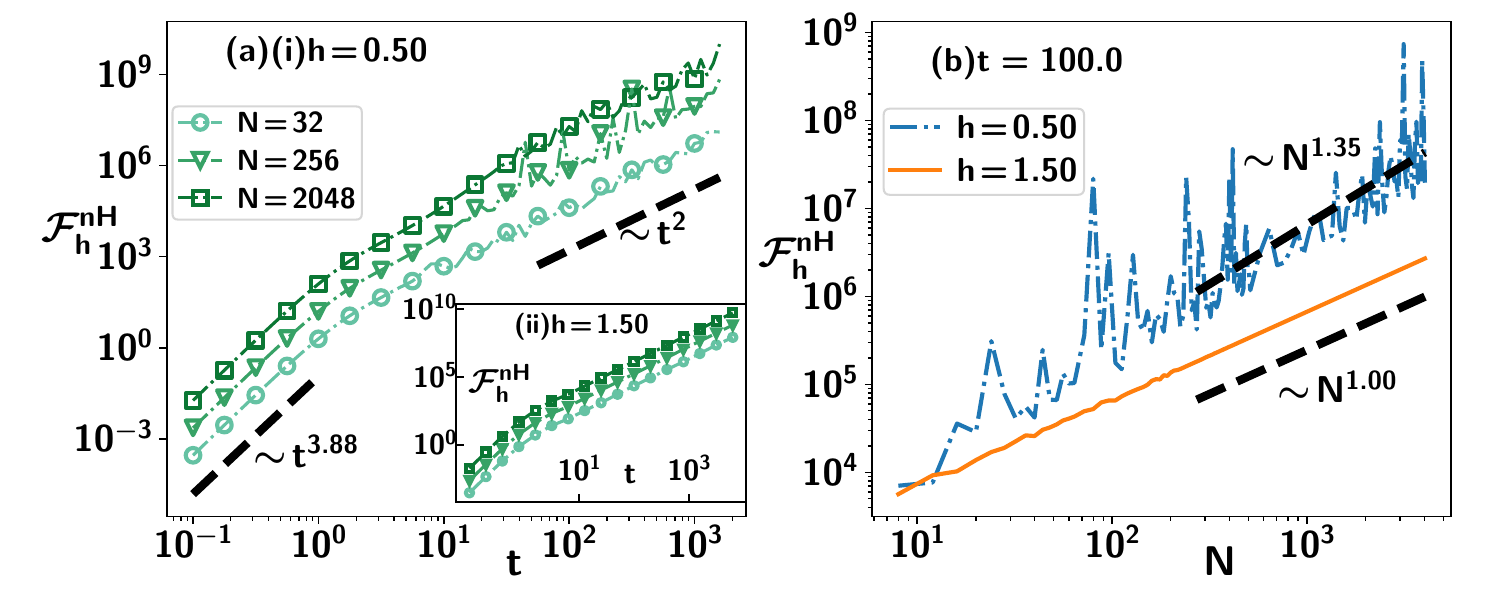}
    \caption{\textbf{Scaling of Fisher information with time and system-size.}  (a) QFI (ordinate) against time (abscissa) for different system-sizes in (i) broken  and (ii) unbroken phases. QFI goes from $t^\beta$, with $\beta\sim4$ to $\beta =2$ at large time. (b)  QFI scaling with system-size $N$ (abscissa) at $t=100$. Other parameters are \(K=0.2\) and \(\gamma=0.5\). In the unbroken phase, we obtain the SQL ($\mu=1$), whereas in the broken regime, scaling goes beyond the SQL scaling, i.e., $\mathcal{F}^{nH}_h\sim N^\mu$ with $\mu>1$. All the axes are dimensionless.} 
    \label{fig:dyn_qfi1}
 \end{figure}

In the non-Hermitian case, we observe that $\mathcal{F}^{nH}_h\sim N^\mu t^\beta$  with \(\mu >1\) and \(\beta >2\) in some situations. Specifically, 
 we find that in the unbroken phase, where the Hamiltonian is pseudo-Hermitian,  the behavior is similar to its Hermitian counterparts -- for small time, say $t<1$, $2< \beta\lesssim 4$, while at large times, SQL is retrieved  (see Fig. \ref{fig:dyn_qfi1} (a)).
 %i.e. $\mathcal{F}^{nH}_h\sim N t^2$.
 %Similar behavior is also observed while evolving with the Hermitian Hamiltonian instead, i.e. $\beta\lesssim4$ for short time as shown in  
 Interestingly, in the broken phase, the QFI increases non-monotonically with the system-size at large times. While at small times, $\beta\lesssim4$ and $\mu\sim 1$ is obtained, at large times,  $\mathcal{F}^{nH}_h\sim N^\mu t^2$, with $\mu>1$, which is not the case if the evolution is done with Hermitian Hamiltonian of same parameters (see Fig. \ref{fig:dyn_qfi1}(b)), highlighting the benefit of non-Hermitian system in dynamics.
 
 %. This shows the non-Hermitian advantage in the dynamical scenario, where SQL is transcended, even though Heisenberg scaling is not achieved .

\section{Conclusion}
\label{sec:conclu}

Quantum mechanical principles have been used to discover a number of upgraded and efficient technologies that perform better than their classical equivalents.  Notable examples include quantum sensors that promise enhanced precision in biomedical operations, quantum imaging, and the detection of magnetic fields and gravitational waves.  In order to apply quantum sensing techniques, it is necessary to build efficient protocols as well as appropriate physical platforms. In this context, non-Hermitian sensing using exceptional points has emerged as a possible method for developing quantum sensors. 

In conclusion, we identified a \(\mathcal{RT}\)-symmetric non-Hermitian Hamiltonian capable of accurately estimating an external magnetic field.  The model based on non-Hermitian \(XY\) and  Kaplan-Shekhtman-Entin-Wohlman-Aharony (KSEA) interactions among \(N\) spin-\(1/2\) particles can be experimentally realized via reservoir engineering, making it a realistic choice for investigation.  Furthermore, it can be solved analytically, allowing for investigation in both the thermodynamic limit and finite system sizes. The Hamiltonian exhibits entirely real eigenvalues within specific parameter regimes, which can be modified using a complex gauge transformation, creating a tunable transition between broken and unbroken regions controlled by the external magnetic field.
%\adi{On the other hand, we highlight the advantage of quantum Fisher information (QFI) scaling achieved through dynamical evolution under a non-Hermitian Hamiltonian. In the symmetry-broken regime, our findings reveal a quantum advantage in the steady-state behavior as the system size increases.}

We demonstrated that the quantum Fisher information  of the ground state scales as \(\sim N^2\) at exceptional points and gapless-to-gapped region crossings in the thermodynamic limit, indicating the emergence of Heisenberg scaling. Our analytical findings are further validated by numerical simulations. Additionally, we revealed that the interplay between Hermitian and non-Hermitian parameters can lead to a super-Heisenberg scaling (\(\sim N^6\)) for finite system sizes. 

%In the dynamical protocol, namely 
In the dynamical situation, when an initial product state is evolved according to a $\mathcal{RT}$-symmetric non-Hermitian Hamiltonian in the broken phase, we found that the sensing of local magnetic field can be enhanced with system-size, i.e. $N^\mu, \mu>1$ in long times, beating the known standard quantum limit. These results suggest that such non-Hermitian systems hold significant potential for achieving quantum advantages in sensing applications which cannot be achieved via their Hermitian counterparts.

%In dynamics

\acknowledgments

We  acknowledge the use of the cluster computing facility at the Harish-Chandra Research Institute. This research was supported in part by the ``INFOSYS" scholarship for senior students. LGCL is funded by the European Union. Views and opinions expressed are however those of the author(s) only and do not necessarily reflect those of the European Union or the European Commission. Neither the European Union nor the granting authority can be held responsible for them.
This project has received funding from the European Union’s Horizon Europe research and innovation programme under grant agreement No 101080086 NeQST.

\appendix
\section{Exact diagonalization of \(iKSEA\) model}
\label{appendix:exactdiag}

To diagonalize the Hamiltonian,  \( H^{iKSEA}\) in Eq. \ref{eq:Hamil}, we first apply the Jordan-Wigner transformation (JW) which reduces the model  to a free-fermionic Hamiltonian, given by
\begin{eqnarray}
    H^{iKSEA}_{JW} &=& \frac{1}{2}\sum_j \left( c_j^\dagger c_{j+1} + c_{j+1}^\dagger c_{j} \right) + i(\gamma-K) c_j^\dagger c_{j+1}^\dagger \nonumber \\
    && + i(\gamma+K) c_{j+1} c_j + h(2c_j^\dagger c_j - 1),
    \label{eq:iksea_jw}
\end{eqnarray}
where \( c_j^\dagger \) and \( c_j \) are fermionic creation and annihilation operators, obeying \( \{ c_i, c_j^\dagger \} = \delta_{ij} \). This model is a variant of the Kitaev model with imbalanced pairing, exhibiting exotic topological properties \cite{kitaev_iop_2001, li_prb_2018}. 

Next, we perform the second step of the procedure, which involves a Fourier transformation,  transforming the fermionic operators into their conjugate momenta, given by
\begin{eqnarray}
    c_j &=& \frac{1}{\sqrt{N}} \sum_{p=-N/2}^{N/2} \exp \left( -\frac{2\pi jp}{N} \right) c_p, \\
    c_j^\dagger &=& \frac{1}{\sqrt{N}} \sum_{p=-N/2}^{N/2} \exp \left( \frac{2\pi jp}{N} \right) c_p^\dagger.
\end{eqnarray}
Due to the periodic boundary condition, the system exhibits translational invariance, implying that the momentum is a good quantum number. This allows us to decompose the Hamiltonian into individual momentum sectors, such that \( H^{iKSEA}_{JW} = \oplus_p H^{iKSEA}_p \). Consequently, the Hamiltonian in Eq. (\ref{eq:iksea_jw}) simplifies to
\begin{eqnarray}
    H^{iKSEA}_p &=& \sum_{p>0} \left( h + \cos\phi_p \right) \left( c_p^\dagger c_p + c_{-p}^\dagger c_{-p} \right) \nonumber \\
    && + \sin\phi_p \left[ (\gamma - K) c_p^\dagger c_{-p}^\dagger + (\gamma + K) c_p c_{-p} \right] - h, \nonumber
    \label{pksea_ixy}
\end{eqnarray}
where \( \phi_p = \frac{(2p-1)\pi}{N} \) and \( p \in \{1, \dots, N/2\} \) as we adopt the anti-periodic boundary condition \( c_{N+1} = -c_1 \) \cite{santoro_ising_beginners_2020}. In the thermodynamic limit \( N \to \infty \), the momentum becomes continuous, and \( \phi_p \in (0, \pi) \).

% Writing \( H^{iKSEA} \) in the basis \( \{|0\rangle, c_p^\dagger c_{-p}^\dagger |0\rangle, c_p^\dagger |0\rangle, c_{-p}^\dagger |0\rangle \} \), we obtain the following:
% \begin{equation}
% H^{iKSEA}_p =  \begin{bmatrix}
% -h & -(\gamma + K) \sin \phi_p & 0 & 0 \\
% (\gamma - K) \sin \phi_p & 2\cos \phi_p + h & 0 & 0 \\
% 0 & 0 & \cos \phi_p & 0 \\
% 0 & 0 & 0 & \cos \phi_p
% \end{bmatrix} .
% \label{eq:ixy_ksea_p}
% \end{equation}
This Hamiltonian can be expressed as \( H_p^{iKSEA} = \hat{\Gamma}_p^\dagger \mathcal{H}_p \hat{\Gamma}_p \), where \( \hat{\Gamma}_p = (c_{-p}^\dagger, c_p)^T \) is the Nambu spinor \cite{santoro_ising_beginners_2020}, and
\begin{equation}
    \mathcal{H}_p = \left[ \begin{array}{cc}
    -h - \cos \phi_p & -(\gamma + K) \sin \phi_p \\
    (\gamma - K) \sin \phi_p & \cos \phi_p + h
    \end{array} \right].
\label{eq:ksea_P_app}
\end{equation}
The corresponding eigenvalues of this model are \( E_p^{iKSEA} = \pm \epsilon (\phi_p) \), where
\begin{equation}
    \epsilon (\phi_p) = \sqrt{(h + \cos\phi_p)^2 + ( K^2-\gamma^2 )\sin^2\phi_p},
    \label{eq:dispersion_relation_app}
\end{equation}
which can be used to obtain the phase diagram of the model \cite{Agarwal2023May}. In our case, we have used the above equation to prove Theorem 1 and the Corollary 1.  

\section{Resemblance of super-Heisenberg scaling with Hermitian spin chain}
\label{app:sup_heisenberg}
The emergence of super-Heisenberg scaling in the limit \(K \to \gamma\) mirrors the behavior observed in Hermitian Hamiltonians, which can be understood from an alternative perspective. Consider a gauge transformation applied to the fermionic operators of the form:
\begin{equation}
    c_j = e^{\mu/2} e^{i\theta/2} c_j, \quad \text{and} \quad \bar{c}_j = e^{-\mu/2} e^{-i\theta/2} c_j^\dagger,
    \label{eq:map_xy_ksea}
\end{equation}
where \(e^{\mu} = \sqrt{\frac{K - \gamma}{K + \gamma}}\) and \(\theta = -\pi/2\), with the anticommutation relation \(\{c_j, \bar{c}_k\} = \delta_{jk}\). Under this transformation, Eq.~(\ref{eq:iksea_jw}) can be reformulated as
\begin{eqnarray}
    \nonumber H_{XY}^{JW} &=& \frac{1}{2} \sum_j \left( \bar{c}_j c_{j+1} + \bar{c}_{j+1} c_j \right) + \gamma' \left( \bar{c}_j \bar{c}_{j+1} + c_{j+1} c_j \right) \\
    && + h (2 \bar{c}_j c_j - 1),
    \label{eq:xy_jw}
\end{eqnarray}\\
where \(\gamma' = \sqrt{K^2 - \gamma^2}\). This Hamiltonian corresponds to the free-fermionic representation of the Hermitian \(XY\) spin model in a transverse magnetic field, and is also analogous to a one-dimensional Kitaev chain with a modified superconducting pairing term. Notably, as \(K \to \gamma\), we have \(\gamma' \to 0\), indicating a slight breaking of the \(\mathbb{U}(1)\) symmetry, resembling the behavior in the Hermitian case~\cite{Mondal2024Jul}. A similar pattern emerges in the non-Hermitian setting due to the interplay and competition between Hermitian and non-Hermitian contributions.

\section{Derivation of Quantum Fisher Information for ground state of iKSEA model}
\label{app:qfi_proof}
The normalized ground state of the \(2 \times 2\) Hamiltonian in momentum space, as defined in Eq.~(\ref{eq:ksea_P_app}), is given by
\[
\ket{\psi^-}_p = \frac{1}{\sqrt{\mathcal{A}_p^-}} \begin{bmatrix} u_p^- \\ v_p^- \end{bmatrix},
\]
where \( u_p^- = \alpha_p^+ \) and \( v_p^- = \epsilon(\phi_p) - g_p \). Here, the parameters are defined as \( \alpha_p^\pm = (\gamma \pm K) \sin\phi_p \), and \( g_p = h + \cos\phi_p \), with \( \mathcal{A}_p^- \) being the Dirac normalization constant. The state \( \ket{\psi^-}_p \) has eigenvalue \( -\epsilon(\phi_p) \), where \( \epsilon(\phi_p) \) is given in Eq.~(\ref{eq:dispersion_relation_app}). 

The derivative with respect to \( h \), denoted by \( d_h \equiv \frac{d}{dh} \), satisfies
\( d_h \epsilon(\phi_p) = \frac{g_p}{\epsilon(\phi_p)} \), \( d_h u_p^- = 0 \) (since \( d_h \alpha_p^\pm = 0 \)), and \( d_h v_p^- = -\frac{v_p^-}{\epsilon(\phi_p)} \), as \( d_h g_p = 1 \).

Since \( \epsilon^2(\phi_p) \in \mathbb{R} \), the energy for each momentum mode is either real or purely imaginary.

\textbf{Case I:} \( \epsilon^2(\phi_p) > 0 \). In this case, \( \epsilon^*(\phi_p) = \epsilon(\phi_p) \), and both \( u_p^- \) and \( v_p^- \) are real. Thus, the normalization factor becomes
\[
\mathcal{A}_p^- = (\alpha_p^+)^2 + (\epsilon(\phi_p) - g_p)^2,
\]
and its derivative is given by
\[
d_h \mathcal{A}_p^- = -2 \frac{(v_p^-)^2}{\epsilon(\phi_p)}.
\]

The derivative of the ground state with respect to \( h \) is then
\[
\ket{d_h \psi^-}_p = \frac{1}{\sqrt{\mathcal{A}_p^-}} \begin{bmatrix} 0 \\ -\frac{v_p^-}{\epsilon(\phi_p)} \end{bmatrix} + \frac{(v_p^-)^2}{\epsilon(\phi_p) \mathcal{A}_p^-} \ket{\psi^-}_p.
\]

Since \( {_p}\langle \psi^- | d_h \psi^- \rangle_p = 0 \), the non-Hermitian quantum Fisher information reads as
\begin{align*}
\mathcal{F}_h^{nH}(\ket{\psi^-}_p) &= 4\, {_p}\langle d_h \psi^- | d_h \psi^- \rangle_p \\
&= \left( \frac{u_p^- v_p^-}{\epsilon(\phi_p) \mathcal{A}_p^-} \right)^2 \\
&= \frac{1}{\epsilon^2(\phi_p)} \left( \frac{u_p^-}{v_p^-} + \frac{v_p^-}{u_p^-} \right)^{-2}\\&=\frac{\sin^2\phi_p(\gamma^2-K^2)^2}{\epsilon^2(\phi_p)(\gamma g_p+\epsilon(\phi_p)K)^2},
\end{align*}
where we use \(\frac{u_p^-}{v_p^-} = \frac{\alpha_p^+}{\epsilon(\phi_p) - g_p} = -\frac{\epsilon(\phi_p) + g_p}{\alpha_p^-}\).

% Now, using the identity
% \[
% \frac{u_p^-}{v_p^-} = \frac{\alpha_p^+}{\epsilon(\phi_p) - g_p} = -\frac{\epsilon(\phi_p) + g_p}{\alpha_p^-},
% \]
% we can write
% \[
% \frac{u_p^-}{v_p^-} + \frac{v_p^-}{u_p^-} = \epsilon(\phi_p) \left( \frac{1}{\alpha_p^+} - \frac{1}{\alpha_p^-} \right) - g_p \left( \frac{1}{\alpha_p^+} + \frac{1}{\alpha_p^-} \right),
% \]
% which gives Eq.~\ref{eq:QFIreal} in the main text.

\textbf{Case II} : $\epsilon^2(\phi_p)<0$. Therefore, we have  $\epsilon^*(\phi_p)=-\epsilon(\phi_p)$ and $u_p^--\in \mathbb{R}$, giving $\mathcal{A}_p^-=(\alpha^+_p)^2 + g_p^2-\epsilon^2(\phi_p)=2\gamma\alpha^+_p\sin\phi_p$ and $d_h\mathcal{A}_p^-= 0$, which makes the calculations easier. Therefore, $\ket{d_h\psi^-}_p = \frac{-1}{\epsilon(\phi_p)\sqrt{\mathcal{A}_p^-}}[0, v_p]^T$, giving ${_p}\langle\psi^-|d_h\psi^-\rangle_p=\frac{K-\gamma}{2\gamma\epsilon(\phi_p)}$ and ${_p}\langle d_h \psi^- | d_h \psi^- \rangle_p = \frac{K-\gamma}{2\gamma\epsilon^2(\phi_p)}$. Hence, we obtain
\begin{eqnarray*}
    \mathcal{F}^{nH}_h(\ket{\psi^-}_p) &=& 4\left(\frac{K-\gamma}{2\gamma\epsilon^2(\phi_p)} - \frac{(K-\gamma)^2}{4\gamma^2(-\epsilon^2(\phi_p))}\right) \\
    &=&\frac{(\gamma^2-K^2)}{-\epsilon^2(\phi_p)\gamma^2}.
\end{eqnarray*}

% We use these relations in the main text to prove that the QFI has Heisenberg scaling in the static scenario at exceptional and critical points.
\bibliography{ref.bib}

\end{document}